\newtheorem{theorem}{Theorem}
\newtheorem{remark}{Remark}
\newtheorem{definition}{Definition}
\newtheorem{prop}{Proposition}
\DeclareMathOperator*{\argmin}{arg\,min}
\newcommand{\paths}{\mathsf{E}}
\newcommand{\group}{\mathcal{G}}
\newcommand{\capacity}{\mathsf{C}}
\newcommand{\edgedisjoint}{\mathsf{E}}
\def\BibTeX{{\rm B\kern-.05em{\sc i\kern-.025em b}\kern-.08em
		T\kern-.1667em\lower.7ex\hbox{E}\kern-.125emX}}
\begin{document}
	
	\title{Two-Level Priority Coding for Resilience to Arbitrary Blockage Patterns
	\thanks{The research at UCLA was supported in part by the U.S. National Science Foundation (NSF) under grant no. \mbox{ECCS-2229560}. The research at UMN was supported in part by the NSF under grant no. \mbox{CCF-2045237}. This material is also based upon work supported by the NSF under grant no. \mbox{CNS-2146838} and is supported in part by funds from federal agency and industry partners as specified in the Resilient \& Intelligent NextG Systems (RINGS) program.}
    }

\author{%
 \IEEEauthorblockN{Mine Gokce Dogan\IEEEauthorrefmark{1},
 Abhiram Kadiyala\IEEEauthorrefmark{1},
                     Jaimin Shah\IEEEauthorrefmark{2},
                     Martina Cardone\IEEEauthorrefmark{2},
                     Christina Fragouli\IEEEauthorrefmark{1}
                     }
   \IEEEauthorblockA{\IEEEauthorrefmark{1}%
                     University of California, Los Angeles
                     Los Angeles, CA 90095,
                     \{minedogan96, akadiyal, christina.fragouli\}@ucla.edu}
    \IEEEauthorblockA{\IEEEauthorrefmark{2}%
                     University of Minnesota,
                     Minneapolis, MN 55455,
                     \{shah0732, mcardone\}@umn.edu}
 }

	\maketitle
	
\begin{abstract}
Ultra-reliable low-latency communication is essential in \hbox{mission-critical} settings, including military applications, where persistent and asymmetric link blockages caused by mobility, jamming, or adversarial attacks can disrupt \hbox{delay-sensitive} transmissions. This paper addresses this challenge by deploying a multilevel diversity coding (MDC) scheme that controls the received information, offers distinct reliability guarantees based on the priority of data streams, and maintains low design and operational complexity as the number of network paths increases. 
For two priority levels over three \hbox{edge-disjoint} paths, the complete capacity region is characterized, showing that superposition coding achieves the region in general, whereas network coding is required only in a specific corner case. 
Moreover, sufficient conditions under which a simple superposition coding scheme achieves the capacity for an arbitrary number of paths are identified.
To prove these results and provide a unified analytical framework, the problem of designing \hbox{high-performing} MDC schemes is shown to be equivalent to the problem of designing \hbox{high-performing} encoding
schemes over a class of broadcast networks, referred to as combination networks in the literature. 
\end{abstract}

	
	\section{Introduction}
    Ultra-reliable low-latency communication (URLLC) is critical for mission-critical applications, including military systems. However, URLLC is highly susceptible to link blockages, which may result from mobility, environmental factors, jamming, or adversarial attacks~\cite{Jain,Wang17}. Such blockages can cause sudden and persistent outages that severely disrupt \hbox{delay-sensitive} transmissions. Since URLLC applications typically operate over short transmission durations, a blocked link often remains unavailable throughout the transmission window. Moreover, blockage probabilities can vary significantly and are often highly \emph{asymmetric} across different links~\cite{li2008impact,Rangan,thornburg2016performance}. These considerations necessitate the design of resilient transmission mechanisms that can proactively anticipate and mitigate the impact of link blockages.

	Multilevel Diversity Coding (MDC)~\cite{Roche97,Yeung99,Mohajer2008} offers a powerful \textit{proactive} framework to address this challenge. It offers resilience in advance without requiring prior knowledge of the blockages, and allows us to control the received information. Moreover, by encoding \emph{prioritized} information streams with different levels of redundancy, MDC allows the system to meet varying Quality-of-Service (QoS) requirements while enabling a graceful performance degradation. In particular, it ensures that higher-priority data can still be recovered even when only a subset of the network paths remains available.
    
    Despite its advantages, classical formulations of asymmetric MDC~\cite{Mohajer2008} require accounting for all possible blockage patterns (i.e.,
    $2^\paths-1$ patterns for $\paths$ \hbox{edge-disjoint} network paths), resulting in a complexity that exponentially grows as the number of paths increases. Our recent work~\cite{MineISIT2025} addressed this challenge by dividing the blockage patterns into different groups\footnote{The number of groups is determined by the number of priority levels.} instead of analyzing each pattern individually. We showed that our proposed scheme achieves capacity (i.e., it is optimal) for two priority levels when all the blockage patterns that may occur in the network are considered in the design. In this paper, we take a step further and develop an MDC scheme that allows for selective inclusion of blockage patterns, which offers additional design flexibility. This scenario is motivated by practical settings where certain blockage patterns may never occur due to link protection or may occur with such low probability that they can be ignored for design purposes. Our results show that, depending on how the groups are structured, selectively including patterns in the design can improve the achievable rates. As a first step, in this paper we focus on the practically relevant case of two priority levels across information streams, such as high-priority and low-priority data. This setting may be particularly important in military applications, where high-priority streams may carry \hbox{command-and-control signals}, situational awareness updates, or threat alerts that require high reliability guarantees even under degraded conditions. In contrast, lower-priority data may carry background telemetry or non-urgent sensor logs, which can tolerate partial losses.

 Our main contributions are summarized as follows.
    \begin{itemize}
    \item We provide a unified analytical framework by reducing the problem of designing \hbox{high-performing} MDC schemes to the problem of designing \hbox{high-performing} encoding
schemes over a class of broadcast networks, referred to as  combination networks in the literature~\cite{ngai2004network}. 
    
    \item We characterize the complete capacity region (i.e., maximum rates of communication) of the proposed MDC scheme for the case of two priority levels and three edge-disjoint paths. We derive interpretable expressions for the capacity region and propose simple coding schemes that can achieve this region. In particular, we show that superposition coding is optimal in general, while network coding is required only in a specific corner case. 
    \item We identify sufficient conditions under which a simple superposition coding scheme achieves the capacity region for two priority levels and an arbitrary number of network edge-disjoint paths.
     \end{itemize}

	
	\subsection{Related Work}
    	Existing works in the literature provide resilience against blockages through either {\em reactive} or {\em proactive} strategies. Reactive methods~\cite{Jeong,Abari,10024832,MineMilcom,10279412,10439990,le2020overview} respond to blockages after they occur using interleaving and feedback mechanisms, but the resulting delay often makes them unsuitable for URLLC. 
        In contrast, proactive mechanisms attempt to mitigate communication impairments or blockages in advance~\cite{Minepassive,MineISIT22,Mine23Arxiv,SurMobicom,Haider,Wei,Va,Ferreira23}, reducing latency, but they either lack control over the specific information received due to the absence of coding, or they process multiple data streams within a single codeword without accommodating their distinct QoS requirements. 
	
	To address the aforementioned gaps, our prior work~\cite{MineMILCOM23,MineGlobecom,DoganICC24,Mine_journal} proposed MDC schemes to control the received information and offer distinct reliability guarantees based on the priority of data streams. However, these works either overlook asymmetric blockage probabilities or lack guarantees of optimality. In~\cite{Mohajer2008}, the capacity region of an MDC scheme was derived for up to three paths under asymmetric blockages, but the exhaustive analysis leads to complex expressions and high implementation complexity. Our recent work~\cite{MineISIT2025} proposed a grouping strategy that partitions the blockage patterns into distinct groups to reduce design complexity. However, the optimality result in~\cite{MineISIT2025} holds when all the blockage patterns are taken into account. Differently, in this paper we develop an MDC scheme that allows us to select a subset of the blockage patterns, which offers additional design flexibility. Our results show that this flexibility can lead to improved rates depending on the group structures.
	
    \subsection{Paper Organization}
    In Section~\ref{sec:Background}, we provide background on the network model, link blockage characteristics, erasure codes, and asymmetric MDC. In Section~\ref{sec:Problem_Formulation}, we formally define our problem and introduce our MDC scheme. In Section~\ref{sec:reduction}, we establish a connection between our MDC setting and combination networks. In Section~\ref{sec:3_paths}, we present our main technical contributions. Finally, in Section~\ref{sec:conclusion}, we summarize our findings and outline our future research directions.

\section{System Model and Background }\label{sec:Background}
\noindent     \textbf{Notation.}   
    $[a\!:\!b] \coloneqq \{a,a\!+\!1,\dots,b\}$ for integers $a<b$; $|\mathcal{S}|$ denotes the cardinality of a set $\mathcal{S}$;  
    $\mathbf{0}$ is an all-zero vector.

We consider a network that consists of a source (node $0$), a destination (node $N+1$), and $N$ intermediate relays. To facilitate an information-theoretic analysis, we model the network as a directed graph, where edges represent \hbox{point-to-point} communication links with associated capacities. This modeling captures the essential structure of the network while abstracting away physical layer constraints. The resulting network consists of $\paths$ edge-disjoint paths between the source and the destination, each assumed to have capacity equal to $\capacity$. 
The source can transmit $\paths$ packets simultaneously -- one per path -- and the destination can receive them concurrently. Differently, at each point in time, each relay can receive from and transmit to at most one node\footnote{Our results extend to cases where relays have multiple beams.}.

\noindent \textbf{Link Blockages.}
In mission-critical applications such as military communications, links may experience random and potentially adversarial blockages. This can lead to unavailability of paths over the entire transmission window. 
Moreover, blockage probabilities can be highly \emph{asymmetric} across different links~\cite{Rangan,thornburg2016performance}.
For the purpose of designing resilient transmission mechanisms, link blockage probabilities can be accurately estimated in advance by suitably modeling the blocker arrival process~\cite{Jain, Wang17,ahmed2022blockage}.
We build on the existence of such models and assume that these probabilities are correctly estimated. We also assume zero-error capacity channels, where packets are either received without error or entirely lost.

\noindent \textbf{Erasure Codes.}
Erasure codes improve reliability by tolerating a fixed number of packet losses. In particular, an $(n,k)$ erasure code encodes $k$ information packets into $n>k$ coded packets, allowing recovery from any subset of $k$ packets. Thus, it can tolerate up to $n-k$ erasures. However, it does not offer a graceful performance degradation: the information rate drops to zero if fewer than $k$ packets are received; and it remains fixed at $k/n$ if at least $k$ packets are received, regardless of how many packets are received beyond $k$.


\noindent\textbf{Asymmetric MDC.}
We focus on \emph{asymmetric} MDC~\cite{Mohajer2008} -- a generalization of the symmetric variant~\cite{Roche97,Yeung99}, which is well-suited for networks with \emph{unequal} link blockage probabilities. The goal is to encode the source sequences so
that we can accommodate their distinct reliability requirements.

In asymmetric MDC, there are $2^{\paths}\!-\!1$ \emph{independent} source sequences, denoted by $U_i,i \in [1:2^{\paths}-1]$ with information rates $R_i, i\in[1:2^{\paths}-1]$. 
We assume that the source sequences are ordered with decreasing importance (i.e., $U_{1}$ is the most important and $U_{2^{\paths}-1}$ the least important). The source sequences are mapped into $\paths$ distinct descriptions through an encoder. At the destination, there are $2^{\paths}\!-\!1$ decoders operating, each connected to a \emph{non‑empty} subset of the descriptions. Based on which descriptions each decoder can access, the decoders are assigned with ordered levels: there are $2^{\paths}-1$ levels. The decoder at level $h \in [1:2^{\paths}-1]$ decodes the $h$ most important source sequences $U_i, i \in [1:h]$.  

\begin{figure}
    \centering   \includegraphics[width=.75\columnwidth]{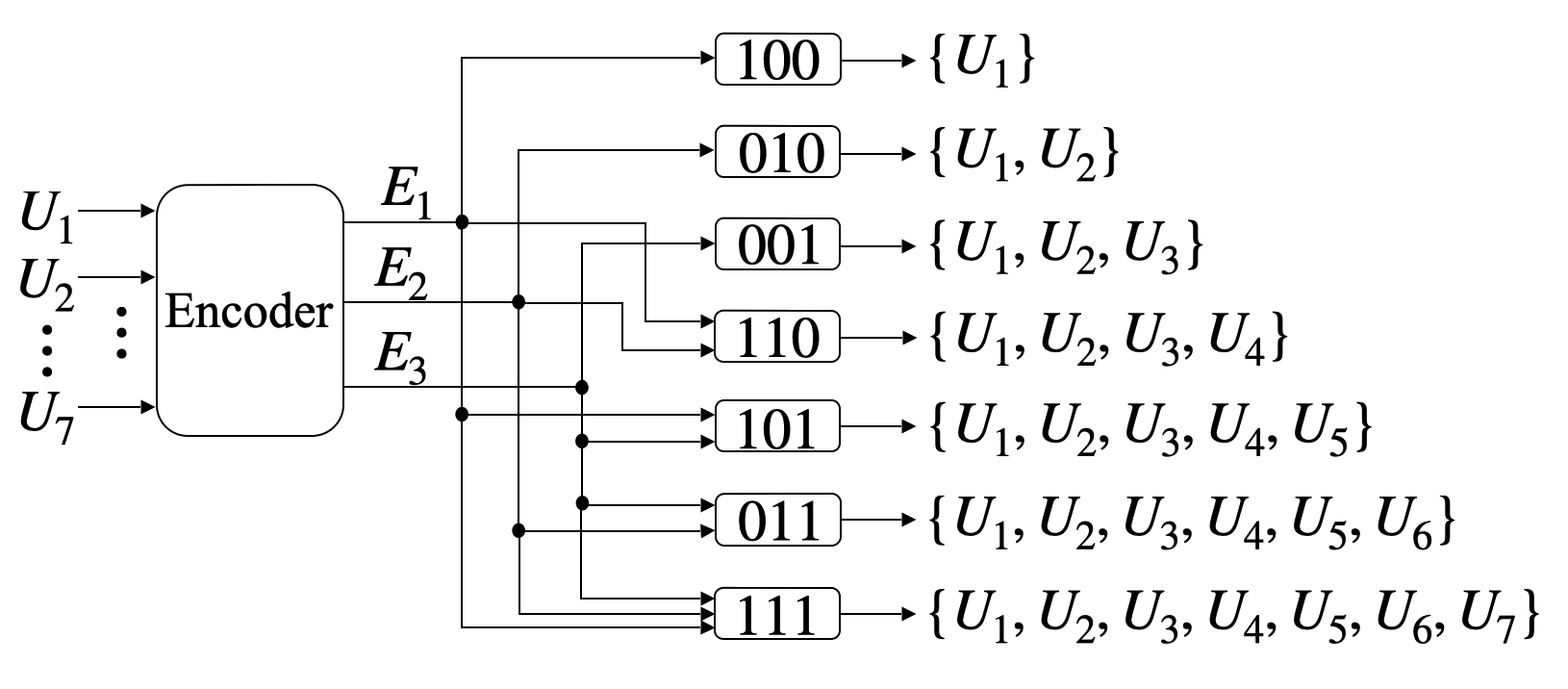}
    \caption{$7$-level asymmetric MDC setting.}
    \label{fig:asymmetric_setting}
    \vspace{-0.15in}
\end{figure}

\noindent\emph{Example 1.} Consider the network in Fig.~\ref{fig:asymmetric_setting} for $\paths =3$. The encoder encodes $2^3-1=7$ source sequences $U_{i},i \in [1:7]$ into $3$ descriptions $E_{i}, i \in [1:3]$. 
The decoder at level~$h \!\in\! [1:7]$ recovers the $h$ most important sequences, e.g., the decoder at level $3$ has access to the third description $001$ and it recovers~$U_{i},i\!\in\![1:3]$.  \hfill \qed

\section{Problem Formulation}\label{sec:Problem_Formulation}
\medskip
In this section, we formally define our problem and present our MDC setting. In particular, we discuss how MDC can be leveraged to design a resilient transmission scheme.

\noindent\textbf{Applying MDC.}
Consider a network with $\paths$ \hbox{edge‑disjoint} paths, denoted by $p_{[1:\paths]}$, each suffering an independent blockage event. As discussed in Section~\ref{sec:Background}, we create $\paths$ packets (i.e., descriptions) by encoding the $2^{\paths} - 1$ source sequences. 
We then transmit these $\paths$ packets, one over each edge-disjoint path. A \emph{blockage pattern} is represented by a binary vector $b \in \{0,1\}^{\paths} \setminus \{\mathbf{0}\}$. In particular, if path $p_i \in p_{[1:\paths]}$ is blocked, then $b(i)=0$ where $b(i)$ denotes the $i$-th element of $b$, and $b(i)=1$ otherwise. 
Since paths fail independently, there are exactly $2^{\paths} - 1$ non‑trivial patterns and the probability of a specific blockage pattern $b$ occurring is defined next. 
\begin{definition}[Blockage Pattern Probability]\label{def:prob_blockage_pattern} 
	Let $\mathcal{B}$ denote the set of all non-zero blockage patterns in the network $($i.e., \hbox{$\mathcal{B} = \{0,1\}^{\paths} \setminus \{\mathbf{0}\})$}, and let $P(b)$ denote the probability of a specific blockage pattern $b \in \mathcal{B}$ occurring. Then,
	\begin{equation}
		P(b) = \prod_{i=1}^{\paths} q_i^{1-b(i)}(1-q_i)^{b(i)},
	\end{equation}
	where $q_i$ denotes the blockage probability of path \hbox{$p_i \in p_{[1:\paths]}$}. 
\end{definition}
A path being unblocked indicates that the description that is sent through that path is received. Thus, every blockage pattern can correspond to one decoder in the MDC setting as illustrated in Fig.~\ref{fig:asymmetric_setting}, and the binary vector representation shows which descriptions the corresponding decoder has access to. That is, when pattern $b$ occurs, the corresponding decoder receives the descriptions indexed by the $1$‑entries of $b$ and can decode the source sequences up to its level.
The complexity of MDC increases as the number of \hbox{edge-disjoint} paths $\paths$ increases ($2^{\paths}-1$ blockage patterns are considered). We build on our recent work~\cite{MineISIT2025} and decrease the design and operational complexity of MDC by dividing the blockage patterns into $m$ different groups $\group_i, i\in[1:m]$. 
Then, we encode $m$ source sequences $U_i,i\in[1:m]$ based on the formed groups instead of individual blockage patterns. We create the groups such that $\bigcup_{i=1}^m \group_i \subseteq \mathcal{B}$ i.e., some blockage patterns might not appear in any of the groups. This is motivated by scenarios where some blockage patterns have a small probability of occurrence (see Definition~\ref{def:prob_blockage_pattern}). In such cases, we can indeed create the groups by ignoring the blockage patterns that occur with a probability that is lower than some specified threshold, which can be selected by an application of interest. This can allow us to achieve higher rates for the source sequences (see Section~\ref{sec:3_paths}).

\begin{remark}\label{remark:2_groups}
While there are $(m+1)^{2^\paths - 1}$ possible ways to assign blockage patterns into $m$ groups, many configurations are redundant for capacity region analysis. Nonetheless, characterizing the capacity region remains challenging for arbitrary values of $m$ and $\paths$. As a first step, in this paper we focus on the practically relevant case of $m = 2$ groups.
\end{remark}
As noted in Remark~\ref{remark:2_groups}, we partition the patterns into two groups, namely $\group_1$ and $\group_2$, corresponding to two independent source sequences: $U_1$ (high priority) and $U_2$ (low priority), with information rates $R_1$ and $R_2$, respectively. We require to decode \textit{at least} $U_1$ if \textit{any} pattern from $\group_1$ occurs and to decode both $U_1$ and $U_2$ if \textit{any} pattern from $\group_2$ occurs. This two-level coding structure is illustrated in Fig.~\ref{fig:2_level_mdc}.
	\begin{figure}
		\centering  \includegraphics[width=0.65\columnwidth]{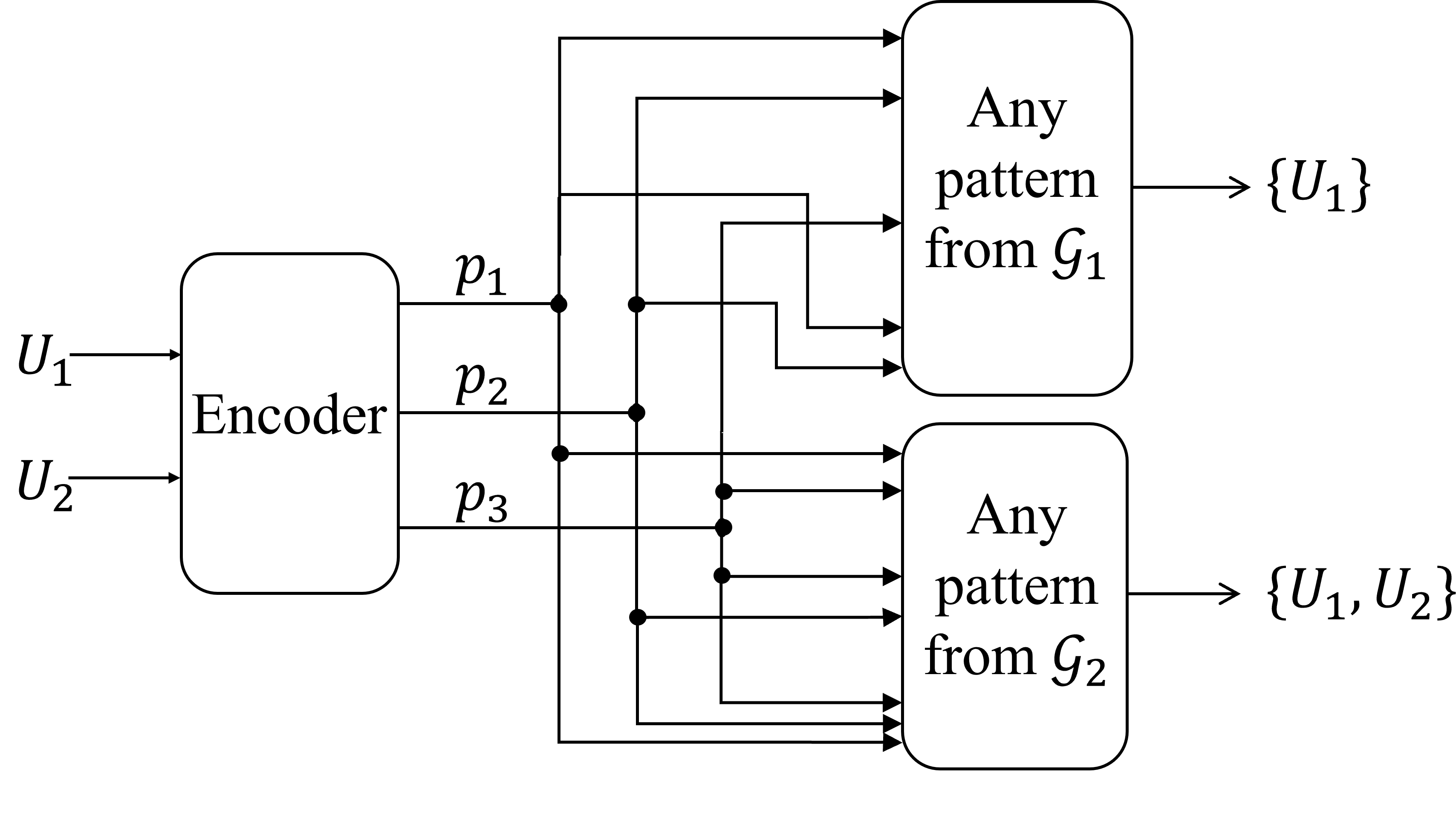}
		\caption{2-level MDC  with $m=2$ groups, $\paths=3$ paths.}
		\label{fig:2_level_mdc}
		\end{figure}
        
For any blockage pattern $b \in \group_1 \cup \group_2$, we let $\mathcal{S}(b)$ denote the set of unblocked paths in pattern $b$. For each group \hbox{$\group_i, i \in [1:2]$}, we let
\begin{equation} \label{eq:min_path Def}
	\kappa_i = |\mathcal{S}(b^\star_i)|, \qquad b^\star_i = \argmin_{b \; \in \; \group_i} |\mathcal{S}(b)|,
\end{equation}
We next define the probability of successfully decoding each source sequence.
\begin{definition}[Probability of Decoding]\label{def:prob_decoding}
	For the group \hbox{$\group_i, i \in [1:2]$}, the probability of decoding the source sequence $U_i$ is denoted by $P(U_i)$ and given by
	\begin{equation}
		P(U_i) = \sum_{j = i}^2 \; \sum_{b \in  \group_j} P(b),
	\end{equation}
	where $P(b)$ is defined in Definition~\ref{def:prob_blockage_pattern}. 
\end{definition}

We note that the priority levels of the source sequences as well as their QoS requirements are determined by the application of interest. However, the capacity regions (see Section~\ref{sec:3_paths}) and the reliability that we can offer for each source sequence (i.e., the probability of decoding that source sequence) are affected by how the groups are designed (i.e., which patterns are included in each group). 
We consider groups $\group_i,i\in\{1,2\}$, that satisfy the two following assumptions:

\noindent {\bf A1:} $\kappa_1 \leq \kappa_2$. We note that this is a reasonable assumption. As we will discuss in Section~\ref{sec:3_paths}, the value of $\kappa_i,i\in \{1,2\}$ constrains the capacity region, which constrains the information rates of the source sequences. For example, $\kappa_1$ constrains the value of $R_1$, while $\kappa_2$ constrains $R_1+R_2$ (see Section~\ref{sec:3_paths} for a detailed analysis). Thus, it is reasonable to create the groups such that $\kappa_1 \leq \kappa_2$.

\noindent {\bf{A2:}} $\mathcal{S}\left( b_2\right) \not \subseteq \mathcal{S}\left( b_1 \right)$ for $b_2 \in \group_2$ and $b_1 \in \group_1$. This assumption is also reasonable since it ensures that more unblocked paths do not decrease the set of source sequences that one can reliably decode. For the same reason if $b= 111 \in \group_1 \cup \group_2$, and $\group_2$ is not empty, then we assume that $b= 111 \in \group_2$. 


	\section{Reduction to Combination Networks}
    \label{sec:reduction}
In~\cite{bidokhti2016capacity}, the authors considered a scenario where a source needs to encode and transmit a {\em common} message and a {\em private} message toward a set of users. In particular, a subset of the users (referred to as public receivers) require only the common message, whereas the remaining users (referred to as private receivers) demand both the common and the private messages. The authors of~\cite{bidokhti2016capacity} focused on a scenario, where the communication network has a special structure, namely it is a {\em combination network}~\cite{ngai2004network}.
\begin{definition}[Combination Network~\cite{ngai2004network}]
A combination network is a three-layered network with a single source and multiple destinations. 
It consists of a source node in the first layer, a set of $K$ intermediate nodes in the second layer, and a set of $D$ destination nodes in the third layer. 
The source is connected to all the intermediate nodes and each intermediate node is connected to a subset of the destinations. Each network link has the same capacity $\capacity$. 
An example of a combination network with $K=3$ and $D=4$ is shown in Fig.~\ref{fig:example_network1}.
\end{definition}
In~\cite{bidokhti2016capacity}, the main focus was on designing high-performing encoding schemes and characterizing the maximum rates of communication that can be achieved over combination networks with public and private receivers. The authors characterized the ultimate rates of communication when the network has either two~\cite[Theorem 3]{bidokhti2016capacity} or three~\cite[Theorem 4]{bidokhti2016capacity} public receivers, and an arbitrary number of private receivers.

We now show that the problem of designing high-performing MDC schemes with two priority levels can be {\em reduced} to the problem of designing high-performing encoding schemes over combination networks with public and private receivers. In particular, each instance of $(\group_1,\group_2)$ can be modeled with a combination network constructed as follows:
\begin{enumerate}
\item Node $0$ of the network is the source node;
\item The $\edgedisjoint$ edge-disjoint paths $p_{[1:\edgedisjoint]}$ in the network are the intermediate nodes; 
\item The $|\group_1 \cup \group_2|$ blockage patterns are the destinations;
\item Node $0$ is connected to all the $\edgedisjoint$ intermediate nodes. An intermediate node $p_i, i \in [1:\edgedisjoint]$ is connected to a destination node $b \in \group_1 \cup \group_2$ if and only if path $p_i$ is unblocked in the blockage pattern $b$. The destination nodes in $\group_1$ are the public receivers (i.e., $U_1$ is the common message) and the destination nodes in $\group_2$ are the private receivers (i.e., $U_2$ is the private message).
\end{enumerate}
\noindent \emph{Example~2.} Consider a network with $\edgedisjoint = 3$. Let $\group_1 = \{100,110\}$ and $\group_2 = \{011,101\}$. This instance of $(\group_1,\group_2)$ can be modeled by the combination network in Fig.~\ref{fig:example_network1}. \hfill \qed

With the above construction, characterizing the maximum rates of communication that can be achieved by an MDC scheme with two priority levels becomes equivalent to the problem of characterizing the maximum flow of information that can be achieved over a combination network.
We will therefore leverage the result in~\cite{bidokhti2016capacity} in the next section.

\begin{figure}
    \centering  \includegraphics[width=0.45\columnwidth]{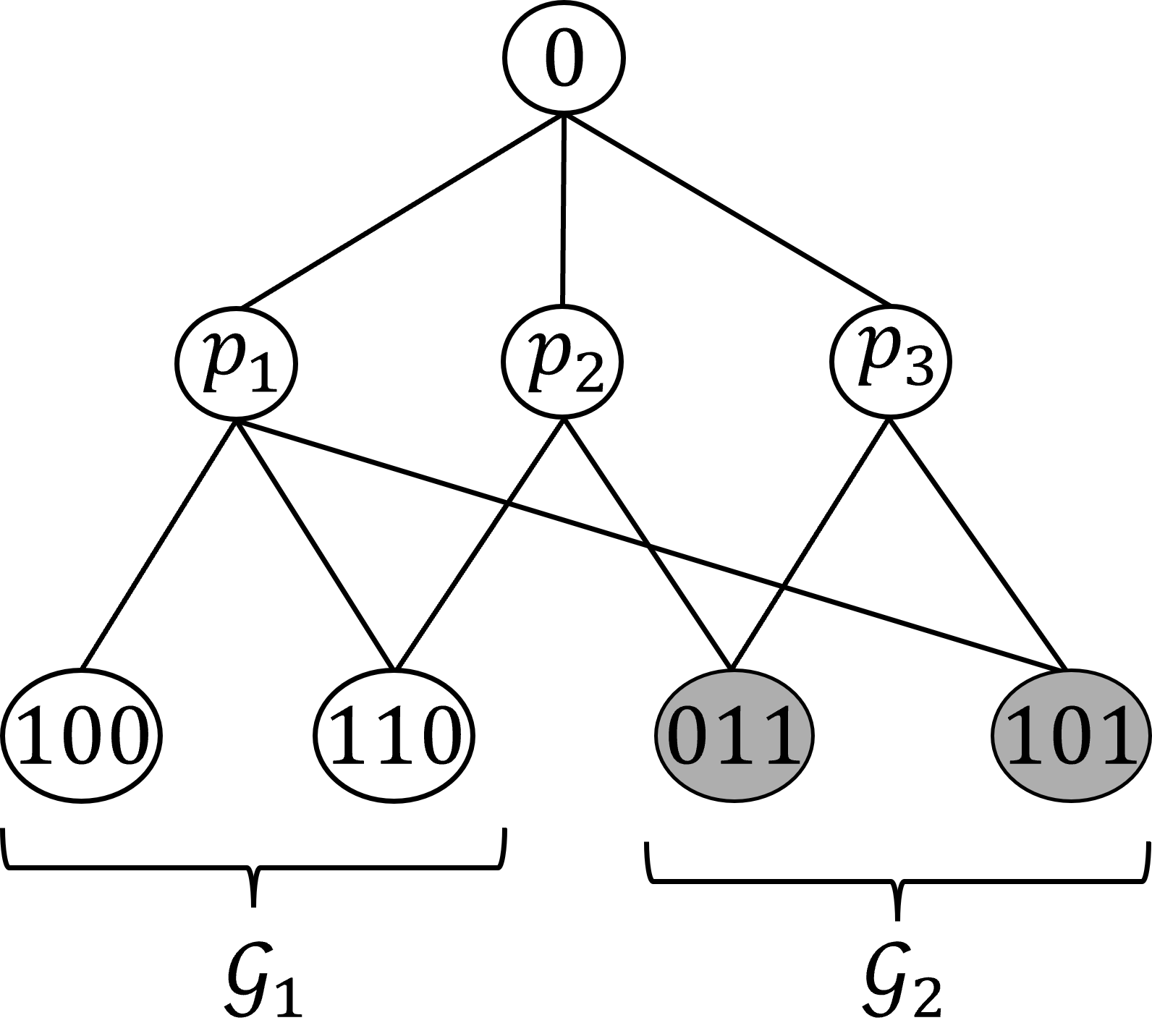}
    \caption{An example of combination network with $K=3$ and $D=4$ that can be used to model the instance $(\group_1,\group_2) = (\{100,110\},\{011,101\})$.}
    \label{fig:example_network1}
    \vspace{-0.15in}
\end{figure}

\section{Capacity Region for $3$ paths}\label{sec:3_paths}
In this section, we present the capacity region characterization for our problem introduced in Section~\ref{sec:Problem_Formulation}. In particular, we present the complete characterization when there are $\edgedisjoint=3$ \hbox{edge-disjoint} paths in the network as illustrated in Fig.~\ref{fig:2_level_mdc}. 
We note that the structure of the groups $\group_1$ and $\group_2$ affects the design of the coding scheme; as such, the capacity region will depend on $\group_1$ and $\group_2$.
In particular, when $\group_1 \cup \group_2 = \mathcal{B}$, i.e., all the $2^\edgedisjoint - 1 = 7$ blockage patterns are considered, then we have recently characterized the capacity region in~\cite[Theorem~2]{MineISIT2025}. Therefore, we here focus on the case where $\group_1 \cup \group_2 \subset \mathcal{B}$.

We start with the following proposition, which considers two `trivial' cases, i.e., either $\group_1$ is empty or $\group_2$ is empty.
\begin{prop}
\label{prop:Trivial}
Consider a network with $\paths=3$ \hbox{edge-disjoint} paths, each with capacity $\capacity$. Divide the blockage patterns into two groups $\group_1$ and $\group_2$ such that \hbox{$\group_1 \cup \group_2 \subset \mathcal{B}$}. Consider two independent source sequences $U_1$ and $U_2$ with information rates $R_1$ and $R_2$, respectively. Then, using $\kappa_i, i \in \{1,2\}$ in~\eqref{eq:min_path Def}, we have the following capacity region characterizations:
\begin{enumerate}
\item If $|\group_2| = 0$, then
\begin{equation}
\label{eq:G2Empty}
R_1 = \kappa_1 \capacity.
\end{equation}
\item If $|\group_1| = 0$, then
\begin{equation}
\label{eq:G1Empty}
R_1+ R_2  = \kappa_2 \capacity.
\end{equation}
\end{enumerate}
\end{prop}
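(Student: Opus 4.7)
The plan is to treat the two items separately; each splits into an achievability argument and a matching converse, and both reduce to standard max-flow/min-cut reasoning via the combination-network reduction established in Section~\ref{sec:reduction}.

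First I would handle Case 1, where $|\group_2|=0$. Since there are no private receivers, the combination network associated with $(\group_1,\emptyset)$ is a pure multicast network in which the source must deliver a single message $U_1$ to every destination node indexed by $b\in\group_1$. The min-cut between the source and any destination $b\in\group_1$ equals $|\mathcal{S}(b)|\capacity$, so by the max-flow min-cut theorem for multicast, the capacity is $\min_{b\in\group_1}|\mathcal{S}(b)|\capacity = \kappa_1\capacity$. For a concrete achievability scheme that avoids invoking general network coding, I would apply an $(\edgedisjoint,\kappa_1)$ MDS erasure code to $U_1$ at rate $R_1=\kappa_1\capacity$, placing one coded packet on each path; every pattern $b\in\group_1$ exposes at least $\kappa_1$ packets to the decoder, which suffices for recovery. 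The converse is then a one-line cut-set argument: evaluating it at the worst pattern $b^\star_1$ from~\eqref{eq:min_path Def} gives $R_1\le\kappa_1\capacity$.

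Next I would handle Case 2, where $|\group_1|=0$. Now every receiver is private and requires both $U_1$ and $U_2$, so from the point of view of the combination network we again face a pure multicast problem, this time with a single combined message of rate $R_1+R_2$. The same max-flow/min-cut argument, evaluated over $\group_2$, yields capacity $\kappa_2\capacity$. Achievability is realized by an $(\edgedisjoint,\kappa_2)$ MDS erasure code applied to the concatenation of $U_1$ and $U_2$, and the converse again follows by restricting attention to the worst pattern $b^\star_2$, which exposes only $\kappa_2\capacity$ units of cut capacity to the decoder.

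The only subtle point—hence the place where a careful write-up is needed—is ensuring that the cut-set converse is valid in our MDC formulation, since the decoding requirement is that \emph{every} pattern in the relevant group be served by a \emph{single} encoding. This is where the reduction in Section~\ref{sec:reduction} pays off: because each pattern becomes an independent destination in the combination network and all destinations must decode under the same source encoding, the standard multicast cut-set bound applies pattern-by-pattern, and minimizing over the group produces the $\kappa_i\capacity$ bound. I do not expect any further obstacle; in particular, the assumptions A1 and A2 play no role in these degenerate cases because one of the two groups is empty.
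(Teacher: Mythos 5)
Your proposal is correct and follows essentially the same route as the paper: a cut-set converse evaluated at the worst pattern $b^\star_i$ of the relevant group, matched by an $(\paths,\kappa_i)$ MDS erasure code (the paper cites its earlier work for this achievability step rather than re-deriving it). The multicast max-flow/min-cut framing via the combination network is a nice sanity check but adds nothing beyond the direct argument the paper gives.
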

\begin{proof}
We consider the two different cases separately.
\begin{enumerate}
\item {\bf Case $|\group_2| =0$.} All the blockage patterns considered are in $\group_1$ and we require that we decode $U_1$ (with rate $R_1$) if any pattern from $\group_1$ occurs. 
Since we consider edge-disjoint paths with equal capacity $\capacity$ and by the definition of $\kappa_1$ in~\eqref{eq:min_path Def}, we have that $R_1 \leq \kappa_1 \capacity$. As we have shown in~\cite[Theorem 1]{MineISIT2025}, this rate is achievable by using a $(3,\kappa_1)$ erasure code. This proves~\eqref{eq:G2Empty}.
\item {\bf Case $|\group_1| =0$.} All the blockage patterns considered are in $\group_2$ and we require that we decode both $U_1$ and $U_2$ (with sum rate $R_1+R_2$) if any pattern from $\group_2$ occurs. 
Moreover, $R_1+R_2$ cannot exceed the sum of the capacities of the unblocked paths in any pattern from $\group_2$, i.e., we need $R_1+R_2 \leq \kappa_2 \capacity$.
As we have shown in~\cite[Theorem 1]{MineISIT2025}, this rate is achievable by using a $(3,\kappa_2)$ erasure code. This proves~\eqref{eq:G1Empty}.
\end{enumerate}
This concludes the proof of Proposition~\ref{prop:Trivial}.
\end{proof}
We now focus on the case where both $\group_1$ and $\group_2$ are non-empty. We start by considering the cases $|\group_1|=2$ and $|\group_1|=3$ for which, as discussed in Section~\ref{sec:reduction}, we can readily apply the results in~\cite[Theorem 3]{bidokhti2016capacity} and~\cite[Theorem 4]{bidokhti2016capacity}.
\begin{prop}
\label{prop:Shirin}
Consider a network with $\paths=3$ \hbox{edge-disjoint} paths, each with capacity $\capacity$. Divide the blockage patterns into two groups $\group_1$ and $\group_2$ such that \hbox{$\group_1 \cup \group_2 \subset \mathcal{B}$}. Consider two independent source sequences $U_1$ and $U_2$ with information rates $R_1$ and $R_2$, respectively. Then, we have the following capacity region characterizations:
\begin{enumerate}
\item If $|\group_2| = 2$, then any achievable rate pair $(R_1,R_2)$ lies in the rate region of~\cite[Proposition 1]{bidokhti2016capacity}.
\item If $|\group_2| = 3$, then any achievable rate pair $(R_1,R_2)$ lies in the rate region of~\cite[Theorem 1]{bidokhti2016capacity}.
\end{enumerate}
\end{prop}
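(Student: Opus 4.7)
The plan is to invoke the reduction to combination networks established in Section~\ref{sec:reduction} and then apply the capacity results of~\cite{bidokhti2016capacity} as black boxes. Because the reduction is constructed precisely so that an MDC rate pair $(R_1,R_2)$ is achievable if and only if the corresponding common/private rate pair is achievable on the associated combination network, both parts of the proposition should follow once the correspondence is made explicit.

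Concretely, I would first take an arbitrary instance $(\group_1,\group_2)$ with $\paths=3$ and construct its combination network as in Section~\ref{sec:reduction}: node $0$ becomes the source, the three edge-disjoint paths $p_1,p_2,p_3$ become the $K=3$ intermediate nodes, and the $|\group_1|+|\group_2|$ blockage patterns become the destinations, where each intermediate node $p_i$ is connected to a destination $b$ iff $b(i)=1$. The destinations in $\group_1$ play the role of public receivers (demanding only $U_1$) and those in $\group_2$ play the role of private receivers (demanding both $U_1$ and $U_2$). In part~(1) the resulting network has exactly two private receivers and an arbitrary (but bounded) number of public receivers; in part~(2) it has exactly three private receivers and an arbitrary number of public receivers. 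The cited statements in~\cite{bidokhti2016capacity} describe the capacity regions precisely for these two configurations, so any $(R_1,R_2)$ that is achievable on the combination network must lie in those regions. Pulling this back through the reduction yields the same outer bound on the MDC rate pair, which is exactly what the proposition claims.

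The main technical point to verify, and the step I expect to need the most care, is that the intermediate-to-destination connectivity produced by the reduction is a valid input to the cited theorems of~\cite{bidokhti2016capacity}. The connectivity is fully encoded by binary vectors in $\{0,1\}^{3}\setminus\{\mathbf{0}\}$, and Assumptions~A1 and~A2 impose extra structure: A1 forces $\kappa_1 \le \kappa_2$, i.e., the min-cut to every private receiver is at least as large as that of some public receiver, and A2 forbids any private receiver's support from being contained in that of a public receiver. Both of these are special cases of the general connectivity patterns for which the cited results are stated, so no adjustment of those theorems is required. The only remaining bookkeeping is to match the notation of~\cite{bidokhti2016capacity} (common/private rates, link capacity $\capacity$, intermediate-to-destination incidence pattern) to that of Section~\ref{sec:reduction}, after which both parts of the proposition are immediate.
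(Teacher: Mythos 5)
Your overall strategy is the same as the paper's: Proposition~\ref{prop:Shirin} carries no standalone proof, and the intended justification is exactly the reduction of Section~\ref{sec:reduction} followed by a black-box invocation of the capacity results of~\cite{bidokhti2016capacity}. In that sense the plan is right. However, there is a concrete error in how you map the cardinality hypothesis onto the cited results. The theorems of~\cite{bidokhti2016capacity} that the paper relies on are stated for combination networks with \emph{two or three public receivers} and an \emph{arbitrary} number of private receivers; under the reduction, public receivers are the destinations in $\group_1$ and private receivers are those in $\group_2$, so the results apply when $|\group_1|\in\{2,3\}$. You instead read the proposition literally ($|\group_2|=2$ or $|\group_2|=3$), conclude that the constructed network has ``exactly two (resp.\ three) private receivers and an arbitrary number of public receivers,'' and then assert that the cited statements ``describe the capacity regions precisely for these two configurations.'' That assertion is unsupported: no result restricting the number of \emph{private} receivers is described in the paper, and the roles of common and private messages are not symmetric in this problem, so you cannot simply swap them.

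The discrepancy originates in the paper itself: the text immediately preceding the proposition introduces it as covering ``the cases $|\group_1|=2$ and $|\group_1|=3$,'' and the subsequent theorem completes the analysis with $|\group_1|=1$ and $|\group_1|>3$, which only forms a complete case analysis if the proposition is indexed by $|\group_1|$. The ``$\group_2$'' in the proposition statement is therefore almost certainly a typo. A correct write-up should either note and resolve this (proving the statement with $|\group_1|\in\{2,3\}$, i.e., two or three public receivers, which is what the cited theorems actually cover), or, if you insist on the literal reading, supply an argument for the private-receiver-restricted configuration --- which the citation alone does not give you. Your remaining observations (that the connectivity produced by the reduction is a valid instance for the cited theorems, and that Assumptions A1 and A2 only specialize the admissible incidence patterns) are fine and match what the paper implicitly assumes.
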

To complete the characterization of the capacity region when there are $\edgedisjoint=3$ \hbox{edge-disjoint} paths, we still need to consider two cases, namely $|\group_1|=1$ and $|\group_1|>3$. We note that when $|\group_1| > 3$, it necessarily follows that $\kappa_1 = 1$ and $\kappa_2 \geq 2$. These cases are addressed in the following theorem. 
\begin{theorem}\label{theorem:main}
Consider a network with $\paths=3$ \hbox{edge-disjoint} paths, each with capacity $\capacity$. Divide the blockage patterns into two groups $\group_1$ and $\group_2$ such that \hbox{$\group_1 \cup \group_2 \subset \mathcal{B}$}. Consider two independent source sequences $U_1$ and $U_2$ with information rates $R_1$ and $R_2$, respectively. 
Let $\kappa_i, i \in \{1,2\}$ and ${b}_i^\star, i \in \{1,2\}$ as in~\eqref{eq:min_path Def} and let
\begin{align*}
\mathcal{C}_1 = \{&\text{There exist at least $\kappa_2$ patterns in $\group_1$, each with} \notag 
\\& \text{exactly one unblocked path, such that the union of} \notag
\\& \text{their unblocked paths equals $\mathcal{S}({b}_2^\star)$}\}.
\end{align*}
Then, we have the following capacity region characterizations:
\begin{enumerate}
\item If $\mathcal{C}_1$ is satisfied, then 
\begin{equation}\label{eq:opt_RateRegion}
R_1+\frac{R_2}{\kappa_2} = \capacity.
\end{equation}
\item If $\mathcal{C}_1$ is not satisfied and if one of the following holds: (i) $|\group_1|=1$; or (ii) $|\group_1|>3$ with $\kappa_2=2$,  then
\begin{subequations}
\label{eq:CapSeconRegime}
\begin{align}\label{eq:cut_sets}
R_1 &= \kappa_1 \capacity,
\\
R_1 + R_2 &=\kappa_2 \capacity.
\end{align}
\end{subequations}
%
%
\item If $\mathcal{C}_1$ is not satisfied and $|\group_1|>3$ with $\kappa_2=3$, then
\begin{subequations}
\label{eq:opt_RateRegion2}
\begin{align}
R_1 & =  \kappa_1 \capacity,\label{eq:opt_RateRegion2_cutset}
\\
2R_1+R_2 & = 3\capacity.\label{eq:opt_RateRegion2_new}
\end{align}
\end{subequations}

\end{enumerate}
\end{theorem}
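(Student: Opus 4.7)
The plan is to prove each of the three characterizations in two steps: an outer bound via cut-set/entropy arguments on the combination-network representation of Section~\ref{sec:reduction}, and an explicit coding scheme meeting it. Let $X_1,X_2,X_3$ denote the three descriptions sent on $p_{[1:3]}$ and $W_1,W_2$ the independent messages of rates $R_1,R_2$. Because $\edgedisjoint=3$, the admissible $(\group_1,\group_2)$ pairs in each case are enumerable, so achievability reduces to a finite case analysis.

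For Case~1, condition $\mathcal{C}_1$ supplies $\kappa_2$ singleton patterns in $\group_1$ whose single unblocked paths exhaust $\mathcal{S}(b_2^\star)$, each forcing $H(W_1\mid X_i)=0$ for the corresponding $i$. Hence in $b_2^\star$ every one of the $\kappa_2$ received descriptions carries $U_1$ in full, leaving residual capacity $\capacity-R_1$ per path for $U_2$; summing gives the converse $R_2\le \kappa_2(\capacity-R_1)$. Achievability is by superposition: on each path in $\mathcal{S}(b_2^\star)$ transmit the concatenation of a copy of $U_1$ and one of $\kappa_2$ equal-sized chunks of $U_2$. Every singleton in $\group_1$ then recovers $U_1$ directly, pattern $b_2^\star$ recovers $U_1$ from any single received path and reassembles $U_2$ from the $\kappa_2$ chunks, and patterns with more unblocked paths are covered a fortiori; assumption~A2 rules out patterns in $\group_2$ of smaller support than $b_2^\star$.

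For Cases~2 and~3 the cut-set bounds $R_1\le \kappa_1\capacity$ (from $b_1^\star$) and $R_1+R_2\le \kappa_2\capacity$ (from $b_2^\star$) are immediate. The novel step is the Case~3 inequality $2R_1+R_2\le 3\capacity$, which does not come from any single cut. Here $\kappa_2=3$ forces $\group_2=\{111\}$ by~A2, $\kappa_1=1$, and $|\group_1|>3$ with $\mathcal{C}_1$ violated implies (by a short enumeration) that $\group_1$ always contains disjoint subsets $\mathcal{A}_1,\mathcal{A}_2\subseteq\{1,2,3\}$ such that $W_1$ is a function of $X_{\mathcal{A}_1}$ and of $X_{\mathcal{A}_2}$, with $|\mathcal{A}_1|+|\mathcal{A}_2|\in\{2,3\}$---either two singletons in $\group_1$, or one singleton together with the complementary doubleton that the cardinality constraint forces into $\group_1$. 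Setting $\mathcal{C}=\{1,2,3\}\setminus(\mathcal{A}_1\cup\mathcal{A}_2)$ and using pattern~$111$ to write $W_2$ as a function of $(X_1,X_2,X_3)$,
\begin{align*}
nR_2 &= H(W_2\mid W_1)\le H(X_1,X_2,X_3\mid W_1) \\
&\le H(X_{\mathcal{A}_1}\mid W_1)+H(X_{\mathcal{A}_2}\mid W_1)+H(X_{\mathcal{C}}) \\
&\le n(|\mathcal{A}_1|+|\mathcal{A}_2|+|\mathcal{C}|)\capacity - 2nR_1 \\
&= 3n\capacity - 2nR_1,
\end{align*}
where I used $H(X_\mathcal{S}\mid W_1)\le n|\mathcal{S}|\capacity - nR_1$ whenever $W_1$ is a function of $X_\mathcal{S}$. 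Dividing by $n$ yields $2R_1+R_2\le 3\capacity$.

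For achievability in Cases~2 and~3 I would construct a superposition scheme: identify the descriptions that must each carry $U_1$ content (those dictated by singleton patterns of $\group_1$, or a cover of $\mathcal{S}(b_1^\star)$ via a $(3,\kappa_1)$ MDS encoding of $U_1$ when $|\group_1|=1$), allocate $U_1$ on them, and fill every residual per-description slot with MDS-coded chunks of $U_2$ so that each pattern in $\group_2$ reassembles $U_2$ after stripping $U_1$. A direct accounting shows that in every enumerated subconfiguration the total budget $3\capacity$ minus the cost of the $U_1$ replicas equals the active converse ($2R_1+R_2\le 3\capacity$ in Case~3, $R_1+R_2\le\kappa_2\capacity$ in Case~2), so the scheme meets the boundary of the claimed region. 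The main obstacle I expect is (i) writing the Case~3 converse uniformly across the two subconfigurations of $\group_1$, which I handle with the $(\mathcal{A}_1,\mathcal{A}_2,\mathcal{C})$ partition above, and (ii) verifying that the superposition scheme correctly serves every pattern in $\group_1\cup\group_2$ beyond $b_1^\star$ and $b_2^\star$---a finite check enabled by A2 and $\edgedisjoint=3$.
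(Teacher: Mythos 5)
Your three converse arguments are sound and essentially coincide with the paper's: Case~1 is the same entropy chain (each description in $\mathcal{S}(b_2^\star)$ determines $U_1$, so only $\capacity-R_1$ per path remains for $U_2$), and your Case~3 bound via the disjoint partition $(\mathcal{A}_1,\mathcal{A}_2,\mathcal{C})$ is a clean, uniform repackaging of the paper's Scenario-by-Scenario derivation of $2R_1+R_2\le 3\capacity$; your enumeration showing such a partition always exists when $|\group_1|>3$, $\kappa_2=3$ and $\mathcal{C}_1$ fails is correct. The genuine gap is in your Case~1 achievability. The scheme you describe places an uncoded chunk of $U_2$ on each path of $\mathcal{S}(b_2^\star)$ and says nothing about the paths outside $\mathcal{S}(b_2^\star)$, and your claim that remaining patterns are ``covered a fortiori'' only holds for supersets of patterns you have already served. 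Two admissible configurations break it: (i) $\group_1=\{100,010,001\}$, $\group_2=\{110\}$ satisfies $\mathcal{C}_1$ and A1--A2, but the singleton $001\in\group_1$ sees only the unspecified path $p_3$ and cannot recover $U_1$; (ii) $\group_1=\{100,010\}$, $\group_2=\{110,101\}$ satisfies $\mathcal{C}_1$, but the pattern $101\in\group_2$ receives only one of your two uncoded $U_2$ chunks and cannot recover $U_2$. The fix is what the paper does: since $\mathcal{C}_1$ forces $\kappa_1=1$, encode $U_1$ with a $(3,1)$ repetition code and $U_2$ with a $(3,\kappa_2)$ MDS code, both spread over \emph{all} three paths; then every pattern in $\group_1$ decodes $U_1$ from any single path and every pattern in $\group_2$ (all of which have at least $\kappa_2$ unblocked paths) decodes $U_2$.

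For Cases~2 and~3 your achievability is a plan rather than a proof; the paper carries out the finite case analysis explicitly (the $|\group_1|=1$ subcases in the main text, and the structural enumerations in the appendices for $|\group_1|>3$). One point to be careful about when you execute it: for $|\group_1|=1$ with $\kappa_1=2$, $\kappa_2=3$, the corner point $(2\capacity,\capacity)$ is \emph{not} reached by a $(3,2)$ MDS encoding of $U_1$ over all three paths (that consumes the entire budget $3\capacity$ and leaves no residual slot for $U_2$); you must instead split $U_1$ uncoded over the two paths of $\mathcal{S}(b_1^\star)$ only and devote the third path to $U_2$. Your phrase ``a cover of $\mathcal{S}(b_1^\star)$ via a $(3,\kappa_1)$ MDS encoding'' is ambiguous on exactly this point and should be replaced by the path-splitting construction.
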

\begin{proof}
We consider the three different cases separately and, in all of them, we use the fact that $\kappa_1\leq \kappa_2 \leq 3$. 
\begin{enumerate}
\item When $\mathcal{C}_1$ holds, we show that $R_1 +\frac{R_2}{\kappa_2}\leq \capacity$ in Appendix~\ref{app:cond_optimality}. 
As we have shown in~\cite[Theorem~1]{MineISIT2025}, this bound can be achieved by encoding $U_1$ and $U_2$ separately with an erasure code $(\paths,\kappa_i)$. This proves~\eqref{eq:opt_RateRegion}. 
\item When $\mathcal{C}_1$ does not hold, we derive an upper bound as follows. We require to decode at least $U_1$ (with rate $R_1$) if any pattern from $\group_1$ occurs, and to decode both $U_1$ and $U_2$ (with sum rate $R_1+R_2$) if any pattern from $\group_2$ occurs. Thus, $R_1$ (respectively, $R_1+R_2$) cannot exceed the sum of the capacities of the unblocked paths in any pattern from $\group_1$ (respectively, $\group_2$). Since we consider edge-disjoint paths with equal capacity $\capacity$ and by the definition of $\kappa_i,i\in\{1,2\}$ in~\eqref{eq:min_path Def}, we have that 
\begin{subequations}
\label{eq:UBG1CardOne}
\begin{align}
R_1 & \leq  \kappa_1 \capacity, \label{eq:R1CS}
\\
R_1 + R_2 & \leq \kappa_2 \capacity. \label{eq:R1plusR2CS}
\end{align}
\end{subequations}
To show achievability, we distinguish two subcases, namely $|\group_1|=1$, and $|\group_1|>3$ with $\kappa_2=2$.

\noindent  $\bullet \ |\group_1|=1$. 
To show achievability of the above bounds, we consider two different cases:
\begin{enumerate}
\item $\kappa_1 = \kappa_2$. For this case, the bound in~\eqref{eq:R1CS} is redundant and we can achieve~\eqref{eq:R1plusR2CS} by encoding both $U_1$ and $U_2$ with $(3,\kappa_1)$ erasure codes.
\item $\kappa_1 \neq \kappa_2$. We show the achievability of the three corner points of~\eqref{eq:UBG1CardOne}, namely $(R_1,R_2)= (0,\kappa_2\capacity)$, $(R_1,R_2)= (\kappa_1 \capacity,(\kappa_2-\kappa_1) \capacity)$, and $(R_1,R_2)= (\kappa_1 \capacity,0)$.
The rate pair $(0,\kappa_2 \capacity)$ can be achieved by encoding $U_2$ with a $(3,\kappa_2)$ code, and the rate pair $(\kappa_1 \capacity,0)$ can be achieved by encoding $U_1$ with a $(3,\kappa_1)$ code.
Finally, to achieve the rate pair $(\kappa_1 \capacity,(\kappa_2-\kappa_1) \capacity)$, we distinguish two cases based on the value of $\kappa_2$: (i) $\kappa_2=2$, and (ii) $\kappa_2=3$. 
When $\kappa_2=2$, we have $\kappa_1 = 1$, and the target rate pair is $(\capacity,\capacity)$. This can be achieved by: (i) transmitting $U_1$ through the unblocked path in the pattern in $\group_1$; (ii) transmitting $U_2$ through any of the remaining two paths; and (iii) transmitting $U_1 + U_2$ through the remaining path. 
When $\kappa_2=3$, we have that $\kappa_1 \in \{1,2\}$. The rate pair $(\kappa_1 \capacity,(3-\kappa_1) \capacity$ can be achieved by: (i) splitting $U_1$ into $\kappa_1$ equal parts, and transmitting them over the $\kappa_1$ unblocked paths in the pattern in $\group_1$; and (ii) splitting $U_2$ into $3-\kappa_1$ equal parts, and transmitting them over the remaining $3-\kappa_1$ paths.
\end{enumerate}

\noindent $\bullet \ |\group_1|>3$ with $\kappa_2=2$.
In Appendix~\ref{app:achievability}, we show that there exist simple coding schemes that achieve the bounds in~\eqref{eq:UBG1CardOne}. This concludes the proof of~\eqref{eq:CapSeconRegime}.


\item When $\mathcal{C}_1$ does not hold and $|\group_1|>3$ with $\kappa_2=3$, we provide the detailed proof in Appendix~\ref{app:final_rate_region}.

\end{enumerate}
This concludes the proof of Theorem~\ref{theorem:main}.
\end{proof}
We present examples for each case in Theorem~\ref{theorem:main} to illustrate the conditions and the corresponding capacity regions.

\noindent \emph{Example~3.} This example illustrates Case 1 in Theorem~\ref{theorem:main}. Consider a network with $\paths = 3$. Let $\group_1 = \{100,010,101,011\}$ and $\group_2 = \{110,111\}$. From~\eqref{eq:min_path Def}, we have $\kappa_2=2$, $b^\star_2=110$, and $\mathcal{S}(b^\star_2) = \{p_1, p_2\}$ in this example. The patterns $b_{1,1}= 100$ and $b_{1,2}=010$ in $\group_1$ have exactly one unblocked path each, and they satisfy $\mathcal{S}(b_{1,1})\cup \mathcal{S}(b_{1,2})=\mathcal{S}(b^\star_2)$. Thus, $\mathcal{C}_1$ is satisfied and the capacity region is characterized by~\eqref{eq:opt_RateRegion}.  \hfill \qed

\noindent \emph{Example~4.} This example illustrates Case 2 in Theorem~\ref{theorem:main} for $|\group_1|>3$ with $\kappa_2=2$. Consider a network with $\paths = 3$. Let $\group_1 = \{100,010,110,011\}$ and $\group_2 = \{101,111\}$. We have $|\group_1|=4>3$, and as defined in~\eqref{eq:min_path Def}, we have $\kappa_2=2$, $b^\star_2=101$, and $\mathcal{S}(b^\star_2) = \{p_1, p_3\}$. The patterns $b_{1,1}= 100$ and $b_{1,2}=010$ in $\group_1$ have exactly one unblocked path each, but $\mathcal{S}(b_{1,1})\cup \mathcal{S}(b_{1,2})=\{p_1,p_2\}\neq \mathcal{S}(b^\star_2)$. Thus, $\mathcal{C}_1$ is not satisfied, and the capacity region is characterized by~\eqref{eq:CapSeconRegime}. \hfill \qed

\noindent \emph{Example 5.} This example illustrates Case 3 in Theorem~\ref{theorem:main}. Consider a network with $\paths = 3$. Let $\group_1 = \{100,010,110,011\}$ and $\group_2 = \{111\}$. In this example, we have $|\group_1|=4>3$, and as defined in~\eqref{eq:min_path Def}, we have $\kappa_2=3$, $b^\star_2=111$, and $\mathcal{S}(b^\star_2) = \{p_1,p_2, p_3\}$. The condition $\mathcal{C}_1$ requires that there exist $\kappa_2=3$ patterns in $\group_1$, each with exactly one unblocked path. However, there are only two such patterns $b_1= 100$ and $b_2=010$ in $\group_1$. Thus, $\mathcal{C}_1$ is not satisfied, and the capacity region is characterized by~\eqref{eq:opt_RateRegion2}. \hfill \qed


\section{Conclusions and Future Work} \label{sec:conclusion}
In this work, we have deployed an MDC scheme over a communication network to offer resilience against link blockages, while providing distinct reliability guarantees based on the priority level of data streams. We reduced the design of high-performing MDC schemes to an equivalent problem over combination networks and characterized the capacity region for two priority levels and three edge-disjoint paths. Building on these results, our ongoing investigation is to derive the capacity region for a larger number of paths. Our recent results in~\cite{MineISIT2025} presented the capacity region for an arbitrary number of paths when all blockage patterns are considered, and the results in~\cite{bidokhti2016capacity} present the capacity region for an arbitrary number of paths when $|\group_1| = 2$ or $|\group_1|=3$. Moreover, Case~1 in Theorem~\ref{theorem:main}, in fact, characterizes the capacity region for an arbitrary number of paths when condition $\mathcal{C}_1$ holds. These initial results provide a first step towards understanding how to extend our analysis to a larger number of paths.    

\bibliographystyle{IEEEtran}
\bibliography{IEEEabrv,bibliography}

\appendices
\section{Proof of Upper Bound in Case~1 of Theorem~\ref{theorem:main}}\label{app:cond_optimality}
We let $X_i^n$ and $Y_i^n$ be the set of packets sent and received over path $p_i, i \in [1:\paths]$, respectively, in $n$ network uses. 
Consider a blockage pattern $b^\star_2$ in $\group_2$ that satisfies the condition $\mathcal{C}_1$ with the number of unblocked paths equal to $\kappa_2$. 
Without loss of generality, we assume that the first $\kappa_2$ paths are unblocked.
We have that
\begin{align}
n(R_1+R_2) &= H(U_1,U_2) \notag
\\& \stackrel{{\rm{(a)}}}{=} I\left(U_1,U_2;Y^n_1,\ldots,Y^n_{\kappa_2}\right) \notag
\\& \quad +H\left(U_1,U_2\mid Y^n_1,\ldots,Y^n_{\kappa_2}\right) \notag 
\\& \stackrel{{\rm{(b)}}}{=} I\left(U_1,U_2;Y^n_1,\ldots,Y^n_{\kappa_2}\right) \notag
\\& \stackrel{{\rm{(c)}}}{\leq} I\left(X^n_1,\ldots,X^n_{\kappa_2};Y^n_1,\ldots,Y^n_{\kappa_2}\right) \notag
\\& \stackrel{{\rm{(d)}}}{=} H\left(X^n_1,\ldots,X^n_{\kappa_2}\right) \notag
\\& \stackrel{{\rm{(e)}}}{=} H\left(X^n_1\right) +\sum_{i=2}^{\kappa_2} H\left(X^n_i\mid X_1^n,\ldots,X^n_{i-1}\right) \notag
\\& \stackrel{{\rm{(f)}}}{\leq} \sum_{i=1}^n H\left(X_{1,i}\right) +\sum_{i=2}^{\kappa_2} H\left(X^n_i\mid X_1^n,\ldots,X^n_{i-1}\right) \notag
\\& \stackrel{{\rm{(g)}}}{\leq} n \capacity +\sum_{i=2}^{\kappa_2} H\left(X^n_i\mid X_1^n,\ldots,X^n_{i-1}\right) \notag
\\& \stackrel{{\rm{(h)}}}{=} n \capacity +\sum_{i=2}^{\kappa_2} H\left(X^n_i\mid X_1^n,\ldots,X^n_{i-1},U_1\right) \notag
\\& \stackrel{{\rm{(f)}}}{\leq}  n \capacity +\sum_{i=2}^{\kappa_2} H\left(X^n_i\mid U_1\right) \notag
\\& \stackrel{{\rm{(a)}}}{=} n \capacity +\sum_{i=2}^{\kappa_2} \left( H\left(X^n_i\right)-I\left(X^n_i;U_1\right) \right) \notag
\\& \stackrel{{\rm{(a)}}}{=} n \capacity +\sum_{i=2}^{\kappa_2} \left ( H\left(X^n_i\right)\!-\!H\left(U_1\right)\!+\!H\left(U_1\mid X_i^n\right) \right ) \notag
\\& \stackrel{{\rm{(h)}}}{=} n \capacity +\sum_{i=2}^{\kappa_2} \left ( H\left(X^n_i\right)-H\left(U_1\right) \right ) \notag
\\& = n \capacity +\sum_{i=2}^{\kappa_2} \left ( H\left(X^n_i\right)-nR_1 \right ) \notag
\\& \stackrel{{\rm{(f)+(g)}}}{\leq} n \capacity +\sum_{i=2}^{\kappa_2} \left ( n \capacity-nR_1 \right ) \notag
\\& = n \capacity + n \left( \kappa_2-1 \right ) \left(\capacity - R_1\right) \notag
\\& = n\kappa_2 \capacity-n(\kappa_2-1)R_1,
\label{eq:FinalStepOB}
\end{align}
where the labeled (in)equalities follow from:
$\rm{(a)}$ the mutual information definition;
$\rm{(b)}$ the fact that $U_1$ and $U_2$ are both fully determined when $\left(Y^n_1,\ldots,Y^n_{\kappa_2}\right)$ are received as we consider a blockage pattern in $\group_2$;
$\rm{(c)}$ the data processing inequality, the chain rule of the mutual information, and the zero-error capacity channels assumption;
$\rm{(d)}$ the \hbox{zero-error} capacity channels assumption;
$\rm{(e)}$ the chain rule of the entropy;
$\rm{(f)}$ the fact that conditioning does not increase entropy;
$\rm{(g)}$ the definition of capacity and the assumption of zero-error capacity channels;
and $\rm{(h)}$ the facts that: (i) the condition $\mathcal{C}_1$ 
(i.e., there exist $\kappa_2$ patterns in $\group_1$ such that these patterns are single-path patterns and the union of their unblocked paths give $\mathcal{S}(b^\star_2)$) holds, 
and (ii) we require $U_1$ (i.e., the most important source sequence) to be decoded for any blockage pattern in $\group_1$, i.e.,  $H\left(U_1\mid X_j^n\right) = 0$ for every $j \in [1:\kappa_2]$.
When we bring the term $n(\kappa_2-1)R_1$ to the \hbox{left-hand} side of the inequality, the outer bound on the sum rate matches the one in~\eqref{eq:opt_RateRegion}. 

\section{Achievability for $|\group_1| > 3$ with $\kappa_2=2$}\label{app:achievability} 

Since there exist $\paths=3$ edge-disjoint paths in the network, there are at most $2^\paths-1 = 7$ possible blockage patterns. 
Given that $\kappa_2 = 2$, then there must exist at least one pattern $b^\star_2 \in \group_2$ in which exactly two of the three paths are unblocked. Moreover, no more than two patterns in which exactly two of the three paths are unblocked can belong to $\group_2$, otherwise this would violate $|\group_1|>3$. However, if $\group_2$ has two patterns in which exactly two of the three paths are unblocked, then all the single-path patterns need to belong to $\group_1$, otherwise $|\group_1|>3$ would again be violated. In this case, condition $\mathcal{C}_1$ would be satisfied and hence, we would fall in the first case of Theorem~\ref{theorem:main}. It therefore follows that we just need to focus on the case where: (i) there exists exactly one pattern in $\group_2$ that has $2$ unblocked paths (otherwise condition $\mathcal{C}_1$ would be satisfied); (ii) $\group_1$ does not contain all the three single-path patterns (otherwise condition $\mathcal{C}_1$ would be satisfied); (iii) $\group_1$ contains exactly four patterns, two of the three single-path patterns and two of the three patterns with two unblocked paths (otherwise $|\group_1|>3$ would be violated); and (iv) the two single-path patterns in $\group_1$ and the pattern with two unblocked paths in $\group_2$ do not satisfy $\mathcal{C}_1$. 
We also note that, since $\kappa_2=2$, $\group_2$ cannot include any pattern with a single unblocked path. However, pattern $111$ can still be included in $\group_2$. 


We now focus on the aforementioned case and show the achievability of the three corner points of~\eqref{eq:UBG1CardOne}, namely $(R_1,R_2)=(0,2\capacity)$, $(R_1,R_2)=(\capacity,\capacity)$, and $(R_1,R_2)=(\capacity,0)$. 
As we have shown in~\cite[Theorem 1]{MineISIT2025}, the rate pair $(0,2\capacity)$ can be achieved by encoding $U_2$ with $(3,2)$ code, and the rate pair $(\capacity,0)$ can be achieved by encoding $U_1$ with $(3,1)$ code. The third rate pair $(\capacity,\capacity)$ can be achieved by: (i) transmitting $U_1$ through each unblocked path that appears in a single-path pattern in $\group_1$ (there are two single-path patterns in $\group_1$ due to the aforementioned reasons); and (ii) transmitting $U_2$ through the remaining path. 

\section{Proof of Theorem~\ref{theorem:main} for $|\group_1| > 3$ with $\kappa_2 = 3$}\label{app:final_rate_region}
We begin with a systematic identification of the group structures that do not satisfy the condition $\mathcal{C}_1$ and for which $|\group_1| > 3$ and $\kappa_2 = 3$.
Since there exist $\paths=3$ edge-disjoint paths in the network, there are at most $2^\paths-1 = 7$ possible blockage patterns. The fact that $\kappa_2=3$ implies that $\group_2$ only contains pattern $111$, and the remaining patterns can only be included in $\group_1$. Moreover, $\group_1$ cannot contain all the three single-path patterns, otherwise condition $\mathcal{C}_1$ would be satisfied. 
We now consider all the cases for which condition $\mathcal{C}_1$ is not satisfied and for which $|\group_1| > 3$ and $\kappa_2 = 3$, by separating them into three scenarios.
In the {\em first} scenario, $\group_1$ includes one single-path pattern and all three patterns with two unblocked paths. In the {\em second} scenario, $\group_1$ includes two single-path patterns and all three patterns that have two unblocked paths. In the {\em third} scenario, $\group_1$ includes two single-path patterns and two of the three patterns with two unblocked paths. We analyze each case separately. We note that for all of these case, $R_1\leq \kappa_1 \capacity$ as we have proved in~\eqref{eq:R1CS}. 




\noindent \textbf{Scenario~1.} 
Without loss of generality, we assume that the \hbox{single-path} pattern in $\group_1$ is $100$. We derive an outer bound on the sum rate using entropy inequalities, following the initial steps outlined in Appendix~\ref{app:cond_optimality},
\begin{align}
n(R_1+R_2) &\leq H\left(X^n_1,X^n_2,X^n_3\right) \notag
\\& \stackrel{{\rm{(a)}}}{=} H\left(X^n_1\right) + H\left(X^n_2,X^n_3 \mid X^n_1\right) \notag
\\& \stackrel{{\rm{(b)}}}{\leq} n \capacity + H\left(X^n_2,X^n_3 \mid X^n_1\right) \notag
\\& \stackrel{{\rm{(c)}}}{=} n \capacity + H\left(X^n_2,X^n_3 \mid X^n_1, U_1\right) \notag
\\& \stackrel{{\rm{(d)}}}{\leq} n \capacity + H\left(X^n_2,X^n_3 \mid U_1\right) \notag
\\& \stackrel{{\rm{(e)}}}{=}  n \capacity + H\left(X^n_2,X^n_3\right)-I\left(X^n_2,X^n_3;U_1\right) \notag
\\& \stackrel{{\rm{(f)}}}{=} n \capacity +H\left(X^n_2,X^n_3\right)- H\left(U_1\right) \notag
\\& \leq 3n \capacity- nR_1, 
\label{eq:FinalStepOB1}
\end{align}
where the labeled (in)equalities follow from:
$\rm{(a)}$ the chain rule of the entropy; $\rm{(b)}$ the definition of capacity and the assumption of zero-error capacity channels; $\rm{(c)}$ the fact that we require $U_1$ (i.e., the most important source sequence) to be decoded for pattern $100$, i.e.,  $H\left(U_1\mid X_1^n\right) = 0$;
$\rm{(d)}$ the fact that conditioning does not increase entropy;
$\rm{(e)}$ the definition of mutual information;
and $\rm{(f)}$ the definition of mutual information and the fact that we require $U_1$ to be decoded for pattern $011$, i.e.,  $H\left(U_1\mid X_2^n, X^n_3\right) = 0$.
When we bring the term $nR_1$ to the \hbox{left-hand} side of the inequality, we obtain the outer bound in~\eqref{eq:opt_RateRegion2_new}. We note that, even if we assumed that the \hbox{single-path} pattern in $\group_1$ is $100$ and applied the chain rule in $\rm{(a)}$ with respect to the entropy of $X^n_1$, the argument is symmetric and applies to any $X_i^n,i\in[1:3]$ since all two-path patterns are included in $\group_1$ in this scenario. 

We next show the achievability of the three corner points of~\eqref{eq:opt_RateRegion2}, namely $(R_1,R_2)=(0,3 \capacity)$, $(R_1,R_2)=(\capacity,\capacity)$, and $(R_1,R_2)=(\capacity,0)$. 
As we have shown in~\cite[Theorem 1]{MineISIT2025}, the rate pair $(0,3\capacity)$ can be achieved by encoding $U_2$ with a $(3,3)$ code. The rate pair $(\capacity,0)$ can be achieved by encoding $U_1$ with a $(3,1)$ code. The third rate pair $(\capacity,\capacity)$ can be achieved by: (i) transmitting $U_1$ through the unblocked path of the \hbox{single-path} pattern in $\group_1$ as well as through one of the other two remaining paths; and (ii) transmitting $U_2$ through the remaining path. 

\noindent \textbf{Scenario 2 and Scenario 3.} In both Scenario~2 and Scenario~3, $\group_1$ includes two single-path patterns. The same steps as in Scenario 1 can be followed to derive the same outer bound in~\eqref{eq:opt_RateRegion2_new}. The rate region has the same corner points as those in Scenario 1. The corner points $(0,3\capacity)$ and $(\capacity,0)$ can be achieved by using the same encoding schemes as those in Scenario 1. The third rate tuple $(\capacity,\capacity)$ can be achieved by: (i) transmitting $U_1$ through the unblocked paths of the two \hbox{single-path} patterns in $\group_1$; and (ii) transmitting $U_2$ through the remaining path.


\end{document}